\documentclass[11pt]{article}
\textwidth 150mm \textheight 210mm \oddsidemargin 5mm
\evensidemargin 5mm \topmargin 0pt
\parindent 6mm
\normalsize
\usepackage{amsfonts, amsthm, amsgen}
\usepackage{graphicx}
\usepackage{soul}
\usepackage[latin1]{inputenc}
\usepackage{tikz}
\usepackage{hyperref}
\usepackage{authblk}
\usepackage{amssymb,latexsym,amsmath}
\usepackage{amsthm}
\usepackage{algorithm}
\usepackage{algpseudocode}
\usepackage{caption}

\usepackage{multicol}
\algblock[Name]{Game}{EndGame}
\theoremstyle{definition}
\newtheorem{theorem}{Theorem}[section]

\newtheorem{lemma}[theorem]{Lemma}

\newtheorem{definition}[theorem]{Definition}

\newtheorem{agame}[theorem]{Game}



\newcommand{\Gal}{\mathrm{Gal}}

\newcommand{\F}{\mathbb F}

\newcommand{\N}{\mathbb N}
\newcommand{\Z}{\mathbb Z}

\newcommand{\Aut}{\mathrm{Aut}}

\sloppy

\title{\bf Public key cryptography based on skew dihedral group rings}

\author{Javier de la Cruz\\\vspace{-10pt}Universidad de Norte, Barranquilla\\ \vspace{5pt} Edgar Mart\' inez-Moro \\Universidad de Valladolid, Espa\~na\\ \vspace{5pt}Ricardo Villanueva-Polanco \\Universidad del Norte, Barranquilla, Colombia
}
\date{}

\begin{document}

\maketitle

\begin{abstract}
In this paper, we propose to use a skew dihedral group ring given by the group $D_{2n}$ and the finite field $\F_{q^2}$ for public-key cryptography. 
Using the ambient space $\F_{q^{2}}^{\theta} D_{2n}$ and a group homomorphism $\theta: D_{2n} \rightarrow \Aut(\F_{q^2})$, we introduce a key exchange protocol and present an analysis of its security. Moreover, we explore the properties of the resulting skew group ring $\F_{q^{2}}^{\theta} D_{2n}$, exploiting them to enhance our key exchange protocol. We also introduce a probabilistic public-key scheme derived from our key exchange protocol and obtain a key encapsulation mechanism (KEM) by applying a well-known generic transformation to our public-key scheme. Finally, we present a proof-of-concept implementation of our cryptographic constructions. To the best of our knowledge, this is the first paper that proposes a skew dihedral group ring for public-key cryptography.

\end{abstract}

\textbf{Keywords.} Skew Dihedral Group Ring; Key Exchange Protocol; Encryption scheme.

\textbf{MSC Classification} 14G50, 94A60, 11T71, 16S35.

\section{Introduction}

The availability of quantum computers in the forthcoming future will make current public-key schemes insecure. Therefore, there is a need for devising quantum-secure cryptographic public-key primitives as a replacement for the current public-key algorithms. This need undoubtedly has propelled research towards creating quantum-secure public-key schemes. There have been many proposed candidates so far, of which the most promising ones are classified into five families. These families are lattice-based cryptography, multivariate cryptography, hash-based cryptography, code-based cryptography, and supersingular elliptic curve isogeny cryptography. The third round of the post-quantum cryptography standardization process run by the National Institute of Standards and Technology (NIST) includes various candidates in each of the mentioned families~\cite{NIST}. 

However, recently a new promising family of cryptographic constructions, believed to be quantum-secure and based on variations of group rings~\cite{cryptosyme,CV}, has been introduced. In particular, the recent works~\cite{cryptosyme,CV} exploit the structure of dihedral twisted group rings to introduce cryptographic constructions. The work~\cite{cryptosyme} introduces a $2$-cocycle $\beta$ in order to construct a dihedral twisted group algebra $\F_q^{\beta}D_{2n}$. Over $\F_q^{\beta}D_{2n}$, the authors build a key-exchange protocol \`{a} la Diffie-Hellman and a probabilistic public-key scheme. Following an alternative approach, the authors of~\cite{CV} propose a key exchange protocol, a probabilistic public-key scheme, and a key encapsulation mechanism. They also introduce a $2$-cocycle $\alpha_{\lambda}$ to form the resulting twisted algebra $\F_q^{\alpha_{\lambda}}G$ non-equivalent to $\F_q^{\beta}D_{2n}$ for a non-square $\lambda$ in the field $\F_q$. They explore its properties and exploit them to enhance the introduced key exchange protocol. 

In other related works, the authors in \cite{CW-1} investigate right ideals as codes in twisted group rings. In particular, they characterize all linear codes that are twisted group codes in terms of their automorphism group.

Our work takes an alternative path by introducing what we call a skew dihedral group ring which is the main tool for constructing a key exchange protocol, a probabilistic public-key scheme, and a derived key encapsulation mechanism. We first formally define the notion of a skew group ring and explore some of its properties. We then study skew dihedral group rings, and later construct a specific skew dihedral group ring by defining the group homomorphism $\theta_\sigma: D_{2n} \rightarrow \Aut(\F_{q^2}) $ stated in Lemma \ref{homo}. In particular, given the presentation $G=D_{2n}=\langle x,y: x^n=y^2=1, yxy^{-1}=x^{-1}\rangle$ of the dihedral group, the map $\theta_\sigma(g)=\sigma$, where $\sigma(a)=a^q$ for all $a\in \F_{q^2}$, for $g=x^iy$, $i\in\{0, \ldots, n-1\}$ and $\theta_\sigma(g)=1$ otherwise is a group homomorphism. Over the resulting skew dihedral group ring $\F_{q^2}^{\theta} D_{2n}$, we realize our cryptographic constructions and analyze their security. Finally, we present a proof-of-concept implementation of our key encapsulation mechanism.

The outline of the paper is as follows. In Section~\ref{II} we show the basic definitions and results we need, whereas in Section~\ref{III} we show the concrete presentation of the dihedral group ring we use. Section~\ref{KEP} presents the proposed key exchange protocol and analyzes its intractability assumptions. Section~\ref{V} presents a probabilistic public-key encryption scheme, and Section~\ref{VI} introduces a key encapsulation mechanism using the ideas from the previous sections. Finally, Section~\ref{implementations} presents the pseudo-codes of a proof-of-concept Python implementation of our cryptographic constructions.

\section{Preliminaries} \label{II}

Let $\F_q$ be the finite field with $q=p^m$ elements where $p$ is a prime number and let $\Aut(\F_q)$ be the automorphism group of $\F_q$. Recall that any automorphism $\Theta\in\Aut(\F_q)$ of the finite field $\F_q$ is of the type $\Theta(x) = x^{p^j}$. Denote by $\Gal(\F_{q^k} , \F_q)$ the Galois group of $\F_{q^k}$ over $\F_q$, i.e. the set of all automorphisms of $\F_{q^k}$ that fix the subfield $\F_q$. It holds
that $\Aut(\F_{q^k}) = \Gal(\F_{q^k}, \F_q) \times \Aut( \F_q)$, in particular $\Aut(\F_q)= \Gal(\F_q, \F_p)$.

In the following paragraphs we summarise the definitions and properties we need on skew group rings.

\begin{definition} Let $G$ be a finite multiplicative group and let $\theta: G \rightarrow \Aut(\F_q)$ be a group homomorphism. The skew group ring $\F_q^\theta G$ is the set of all formal sums $\sum_{g\in G} a_g g$, where $a_g\in \F_q$, with the following skew multiplication
$$a_g g\cdot b_h h= a_g(\theta(g)(b_h)) gh.$$
\end{definition}

Note that as $\F_q$-vector space the skew group ring $\F_q^\theta G$ coincides with the group ring $\F_q G$. However, as rings not only may not coincide, but in general they are non-isomorphic. More precisely we have the following result.

\begin{lemma} Let $\theta$ and $\beta$ be homomorphisms of $G$ into $\Aut(\F_q)$. There is a $\F_q$-isomorphism $\F_q^\theta G \cong \F_q^{\beta}G$ mapping $a_g g$ to $a_g\delta(g)$ for some $\delta(g)\in G$ if and only $\delta: G \rightarrow G$ is an isomorphism such that $\theta(g)=\beta (\delta(g))$ for all $g\in G$.
\end{lemma}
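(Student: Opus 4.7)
The plan is to prove both directions by direct computation with the defining skew multiplication. Let $\Phi\colon \F_q^\theta G \to \F_q^\beta G$ be the $\F_q$-linear map sending $a_g g\mapsto a_g\delta(g)$.

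For the sufficiency direction, I would assume $\delta$ is a group isomorphism with $\theta(g)=\beta(\delta(g))$ for all $g\in G$. Since $\delta$ permutes the canonical basis $\{g\}_{g\in G}$, $\Phi$ is automatically a bijective $\F_q$-linear map, and it sends $1\cdot e$ to itself because a group isomorphism satisfies $\delta(e)=e$. To check $\Phi$ is a ring map, I would verify the product on a generic pair of basis-times-scalar summands: the skew product in $\F_q^\theta G$ gives
$$\Phi(a_g g\cdot b_h h)=\Phi\bigl(a_g\theta(g)(b_h)\,gh\bigr)=a_g\theta(g)(b_h)\,\delta(g)\delta(h),$$
while the skew product in $\F_q^\beta G$ yields
$$\Phi(a_g g)\cdot\Phi(b_h h)=a_g\delta(g)\cdot b_h\delta(h)=a_g\beta(\delta(g))(b_h)\,\delta(g)\delta(h),$$
and these agree by the hypothesis $\theta(g)=\beta(\delta(g))$.

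For the necessity direction, assume $\Phi$ is a ring isomorphism. Unit preservation applied to $1\cdot e$ forces $\delta(e)=e$. For $g,h\in G$, since $g\cdot h=gh$ in $\F_q^\theta G$ (because $\theta(g)(1)=1$), multiplicativity of $\Phi$ yields $\delta(gh)=\Phi(gh)=\Phi(g)\Phi(h)=\delta(g)\delta(h)$, so $\delta$ is a group homomorphism; since $\Phi$ is bijective and permutes the basis, $\delta$ is a bijection of $G$, hence a group isomorphism. To extract the compatibility of the twists, I would insert an arbitrary scalar: for each $b\in\F_q$,
$$\Phi\bigl(g\cdot(b\cdot e)\bigr)=\Phi\bigl(\theta(g)(b)\,g\bigr)=\theta(g)(b)\,\delta(g),$$
while also
$$\Phi(g)\cdot\Phi(b\cdot e)=\delta(g)\cdot b=\beta(\delta(g))(b)\,\delta(g),$$
and equating the coefficients of $\delta(g)$ for every $b\in\F_q$ forces $\theta(g)=\beta(\delta(g))$.

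The argument is essentially bookkeeping, and the main (mild) obstacle is keeping the two skew multiplications straight when computing. Note that the multiplicativity of $\delta$ as a map $G\to G$ is detected already on pure group elements (coefficient $1$), whereas teasing out the condition $\theta=\beta\circ\delta$ requires inserting arbitrary scalars so that the skew action on $\F_q$ becomes visible on both sides.
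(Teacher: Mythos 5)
Your proof is correct and follows essentially the same approach as the paper: a direct comparison of the two skew products on scalar-times-basis-element summands, with the twist compatibility $\theta(g)=\beta(\delta(g))$ falling out as the exact condition for the products to agree. If anything, your necessity direction is more complete than the paper's, which performs the computation once and implicitly uses $\delta(g)\delta(h)=\delta(gh)$, whereas you separately extract multiplicativity of $\delta$ from pure group elements and the twist condition from inserted scalars.
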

\begin{proof} Denote by $a_g g \cdot b_h h$ the product in $\F_q^\theta G$ and denote $a_g g \ast b_h h$ the product in $\F_q^\beta G$. The image of $a_g g \cdot b_h h= a_g \theta (g)(b_h) gh $ in $\F_q^\theta G$ is  $a_g \theta(g)(b_h)\delta(gh)$. The product in $\F_q^\beta G$ of the images of $a_g g$ and $b_h h$ is $a_g \beta(\delta(g))(b_h)\delta(g)\delta(g)=a_g\beta(\delta(g))(b_h)\delta(gh).$ The two elements coincide if and only if $\theta(g)=\beta (\delta(g))$ for all $g\in G$.
\end{proof}

\begin{lemma}\label{anti} The map $\varphi:\F_q^{\theta}G \rightarrow \F_q^{{\theta}}G$, $ \sum_{g \in G} a_g g \mapsto \sum_{g \in G} \theta(g^{-1})(a_g) g^{-1}$, is an ring anti-isomorphism of $\F_q^{\theta}G$.
\end{lemma}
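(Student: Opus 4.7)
The plan is to verify the three properties in turn: additivity, the anti-multiplicative property, and bijectivity; of these, only the anti-multiplicativity requires real work, and even that reduces to a short computation once one reduces to basis elements.

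First I would note that $\varphi$ is $\F_q$-linear on the underlying vector space (in fact $\F_p$-linear, since $\theta(g^{-1})$ is only a field automorphism, not $\F_q$-linear in general, but additivity is all we need for the ring structure). Consequently both $\varphi(xy)$ and $\varphi(y)\varphi(x)$ are bilinear in $x$ and $y$, so it suffices to verify the anti-multiplicative identity $\varphi(x\cdot y)=\varphi(y)\cdot\varphi(x)$ on a pair of basis elements $x=a_g g$ and $y=b_h h$.

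For that computation I would expand the left-hand side using the skew multiplication of the definition: $\varphi(a_g g\cdot b_h h)=\varphi\bigl(a_g\,\theta(g)(b_h)\,gh\bigr)=\theta((gh)^{-1})\!\bigl(a_g\,\theta(g)(b_h)\bigr)\,(gh)^{-1}$. Using that $\theta$ is a group homomorphism, $\theta((gh)^{-1})=\theta(h^{-1})\theta(g^{-1})$, and that $\theta(g^{-1})\theta(g)=\mathrm{id}$, this simplifies to $\theta(h^{-1})\theta(g^{-1})(a_g)\cdot\theta(h^{-1})(b_h)\,h^{-1}g^{-1}$. On the other hand, the right-hand side $\varphi(b_h h)\cdot\varphi(a_g g)=\bigl(\theta(h^{-1})(b_h)\,h^{-1}\bigr)\cdot\bigl(\theta(g^{-1})(a_g)\,g^{-1}\bigr)$, after one more application of the skew multiplication rule, equals $\theta(h^{-1})(b_h)\cdot\theta(h^{-1})\theta(g^{-1})(a_g)\,h^{-1}g^{-1}$. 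The two expressions agree because $\F_q$ is commutative; this is the only step where commutativity of the coefficient field is used, and it is what forces the anti-isomorphism (rather than isomorphism) conclusion.

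Finally, for bijectivity I would observe that $\varphi$ is an involution: applying $\varphi$ twice to a basis element gives $\varphi\bigl(\theta(g^{-1})(a_g)\,g^{-1}\bigr)=\theta(g)\!\bigl(\theta(g^{-1})(a_g)\bigr)\,g=a_g g$, so $\varphi^2=\mathrm{id}$ and $\varphi$ is its own inverse. The main (very mild) obstacle is simply bookkeeping in the two skew-product expansions to make sure the $\theta$-twists compose correctly; no structural subtlety arises beyond the fact that $\theta$ is a homomorphism and the coefficient ring is commutative.
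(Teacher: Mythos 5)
Your proof is correct and follows essentially the same route as the paper's: reduce to basis elements $a_g g$, $b_h h$, expand both sides of $\varphi(a_g g\cdot b_h h)=\varphi(b_h h)\varphi(a_g g)$ using the skew product and the fact that $\theta$ is a homomorphism with $\theta(g^{-1})\theta(g)=\mathrm{id}$, and conclude by commutativity of $\F_q$. Your explicit checks of additivity and of the involution property $\varphi^2=\mathrm{id}$ (hence bijectivity) are welcome additions that the paper's proof leaves implicit.
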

\begin{proof} 
Let $a_gg, b_hh \in \F_q^{\theta}G$. Then we have
$$\begin{array}{rl} \varphi( a_gg \cdot b_hh) &= \varphi(a_g \theta(g)(b_h) gh)=\theta (h^{-1} g^{-1})(a_g \theta(g)(b_h)) h^{-1} g^{-1}\\
& = \theta (h^{-1}) \theta (g^{-1})(a_g \theta(g) b_h) h^{-1}g^{-1}=(\theta(h^{-1}) \theta (g^{-1}) (a_g) \theta(h^{-1}) (b_h) h^{-1} g^{-1}\\
 &= \theta (h^{-1})(b_h) \theta(h^{-1}) \theta(g^{-1}) (a_g) h^{-1} g^{-1}=\theta (h^{-1}) (b_h) h^{-1} \cdot \theta( g^{-1}) (a_g) g^{-1} \\
 & = \varphi (b_h g) \varphi ( a_g g).
\end{array}$$
\end{proof}

\begin{definition} For an element $a = \sum_{g \in G}a_g g \in \F_q^\theta G$ we define its adjunct as
$$ \widehat{a}:=\varphi(a) = \sum_{g \in G}\theta(g^{-1})(a_g) g^{-1}.$$
\end{definition}

\section{A skew dihedral group ring} \label{III}
Let $G=D_{2n}=\langle x,y: x^n=y^2=1, yxy^{-1}=x^{-1}\rangle$ be a presentation of the dihedral group of order $2n$.


\begin{lemma}\label{lemmap}Let $C_n=\langle x\rangle$ be the cyclic subgroup of $D_{2n}$ generated by $x$. Then we have \begin{enumerate}
 \item $\F_q^{\theta}D_{2n}$ is a free $\F_q^{\theta}C_n$-module with basis $\{ 1,y\}$. Therefore $ \F_q^{\theta}D_{2n}=\F_q^{\theta}C_{n}\oplus \F_q^{\theta}C_{n}y$ as direct sum of $\F_q$-vector spaces.
 \item $\F_q^{\theta}C_{n} y \cong \F_q^{\theta}C_{n}$ as $\F_q^{\theta}C_{n}$-modules.
 \item For $a \in \F_q^{\theta}C_{n}y$, $ab \in \F_q^{\theta}C_{n} $ if $b \in \F_q^{\theta}C_{n}y$ or $ab \in \F_q^{\theta}C_{n}y$ if $b \in \F_q^{\theta}C_{n}$.

 \item If $a \in \F_q^{\theta}C_{n}$, then $\widehat{a} \in \F_q^{\theta}C_{n}$.
 \item If $a \in \F_q^{\theta}C_{n}y$, then $\widehat{a} \in \F_q^{\theta}C_{n}y$.
\end{enumerate}
\end{lemma}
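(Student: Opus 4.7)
The plan is to reduce all five statements to the coset decomposition $D_{2n} = C_n \sqcup C_n y$ together with the multiplication table on cosets, and to observe that the skew twist only modifies coefficients, never which coset the underlying group element lands in.

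For part (1), every $g \in D_{2n}$ is either $x^i$ or $x^i y$, so as an $\F_q$-vector space $\F_q^{\theta} D_{2n} = \F_q^{\theta}C_n \oplus \F_q^{\theta}C_n y$. To upgrade this to a free left $\F_q^{\theta}C_n$-module decomposition with basis $\{1,y\}$, I would take $\beta \in \F_q^{\theta}C_n$ and $\alpha_0 + \alpha_1 y$ with $\alpha_i \in \F_q^{\theta}C_n$, and check that $\beta(\alpha_0 + \alpha_1 y) = \beta\alpha_0 + (\beta\alpha_1) y$ preserves the decomposition; uniqueness of the summands is immediate from the disjointness of the cosets. For part (2), the $\F_q$-linear map $\psi: \F_q^{\theta}C_n y \to \F_q^{\theta}C_n$, $\alpha y \mapsto \alpha$, is a bijection, and it respects the $\F_q^{\theta}C_n$-action because $\beta(\alpha y) = (\beta\alpha) y$ in the skew product (no $\theta$ acts on $\alpha$ since $\beta$ is supported on $C_n$ and all multiplications happen on the left).

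For part (3), the key identity in $D_{2n}$ is $y x^j = x^{-j} y$, which in the skew ring gives
\[
(x^i y)(x^j) = x^{i-j} y, \qquad (x^i)(x^j y) = x^{i+j} y, \qquad (x^i y)(x^j y) = x^{i-j}.
\]
Under the skew multiplication $a_g g \cdot b_h h = a_g \theta(g)(b_h) gh$, only the scalar $b_h$ gets twisted by $\theta(g)$; the underlying group product $gh$ is unchanged. Therefore the multiplication table on the cosets $\{C_n, C_n y\}$ is the same as in $\F_q D_{2n}$, which immediately yields both implications of (3).

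For parts (4) and (5), I would use Lemma \ref{anti}: $\widehat{a} = \sum_g \theta(g^{-1})(a_g) g^{-1}$. Inversion in $D_{2n}$ preserves each coset: $(x^i)^{-1} = x^{-i} \in C_n$, and $(x^i y)^{-1} = y^{-1} x^{-i} = y x^{-i} = x^i y \in C_n y$. Hence if $a$ is supported on $C_n$ (resp.\ on $C_n y$), so is $\widehat{a}$. I do not expect any serious obstacle; the main thing to keep track of is that, while the twist $\theta$ may alter the coefficients at various stages of the arguments above, it never affects the coset of the underlying group element, so the coset-level structural statements are purely combinatorial facts about the dihedral group.
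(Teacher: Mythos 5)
Your proof is correct and follows essentially the same route as the paper's: both reduce all five claims to the coset decomposition $D_{2n}=C_n\cup C_ny$, the multiplication rules $x^i\cdot x^jy=x^{[i+j]_n}y$, $x^iy\cdot x^j=x^{[i-j]_n}y$, $x^iy\cdot x^jy=x^{[i-j]_n}$, and the fact that inversion preserves each coset, together with the observation that the skew twist only alters coefficients and never the underlying group element. One minor wording point: in part (2) the parenthetical ``no $\theta$ acts on $\alpha$'' is imprecise --- $\theta(x^i)$ does act on the coefficients of $\alpha$ in $\beta(\alpha y)$ --- but the identity $\beta(\alpha y)=(\beta\alpha)y$ you rely on is nonetheless correct because the same twist appears in $(\beta\alpha)y$, so the isomorphism still intertwines the module actions.
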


\begin{proof} In what follows the symbol $[k]_n$ for $k\in \Z$ denotes $k \equiv [k]_n\mod n$.
\begin{enumerate}
 \item Since $\{1,y\}$ is a transversal for $C_n$ in $D_{2n}$, then $D_{2n}=C_n \cup C_ny$ and the assertion follows.
 \item Since $x^i\cdot x^j=x^{[i+j]_n}$ and $x^i\cdot x^{j}y=x^{[i+j]_n}y$ for all $i,j\in\{0, \ldots, n-1\}$, the assertions follow.
 \item Since $x^i y\cdot x^jy=x^{[i-j]_n}$ and $x^i y \cdot x^{j}=x^{[i-j]_n}y$ for all $i,j\in\{0, \ldots,n-1\}$, the assertions follow.

 \item Since $x^i\cdot x^{n-i}=1$ for all $i\in\{0, \ldots, n-1\}$, then the assertion follows.

 \item Since $(x^iy)^2=1$ for all $i\in\{0, \ldots, n-1\}$, then it follows.

\end{enumerate}
\end{proof}
%
%

\begin{definition} $ \quad$ 
\begin{enumerate} \item We define the $\theta$-reversible subspace of $\F_q^{\theta} C_{n}y$ as the vector subspace 
$$\Gamma_\theta=\{a=\sum_{i=0}^{n-1}a_i x^iy \in \F_q^{\theta} C_{n}y \mid a_i=a_{n-i}\; \textrm{for}\; i=1, \ldots, n-1 \}.$$

\item Given $a=\sum_{i=0}^{n-1}a_i x^iy \in \F_q^{\theta}C_{n} y$ we define $\Phi(a)=\sum_{i=0}^{n-1}a_i x^i \in \F_q^{\theta}C_{n}$.
\end{enumerate}
\end{definition}
Note that the map $\Phi:\F_q^{\theta}C_{n}y \rightarrow \F_q^{\theta}C_{n}$ is an $\F_q$-linear isomorphism.

\begin{lemma}\label{conm} Let $\theta: G=D_{2n}\rightarrow \Aut(\F_q)$ be a group homomorphism. If $\theta(x^i)(a)=a$ for all $i \in\{0, \ldots, n-1\}$ and for all $a\in \F_q$, then $a\hat{b}=b\hat{a}$ for $a,b\in \Gamma_\theta$.
\end{lemma}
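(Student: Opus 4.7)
\medskip
\noindent\textbf{Proof plan.} Two observations drive the argument. First, since $y^2=1$ in $D_{2n}$, the automorphism $\tau:=\theta(y)\in\Aut(\F_q)$ is an involution: $\tau^2=\theta(y^2)=\mathrm{id}$. Second, combined with the hypothesis that $\theta(x^i)$ acts trivially on $\F_q$ and with the homomorphism property, this yields $\theta(x^jy)=\theta(x^j)\theta(y)=\tau$ for every $j$; moreover $(x^jy)^{-1}=x^jy$ in $D_{2n}$ because $(x^jy)^2=1$.

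Using these, I would write down $\hat b$ and $a\hat b$ explicitly. For $b=\sum_j b_j x^j y\in\Gamma_\theta$, the definition of the adjunct gives $\hat b=\sum_j\tau(b_j)x^jy$. Applying the skew multiplication together with the dihedral product $(x^iy)(x^jy)=x^{[i-j]_n}$ then yields
\[
a\hat b=\sum_{i,j}a_i\,\theta(x^iy)\bigl(\tau(b_j)\bigr)x^{[i-j]_n}=\sum_{i,j}a_i\tau^2(b_j)x^{[i-j]_n}=\sum_{i,j}a_ib_jx^{[i-j]_n},
\]
where the last equality uses $\tau^2=\mathrm{id}$. Consistently with Lemma~\ref{lemmap}(3) this lies in $\F_q^\theta C_n$, with coefficient of $x^k$ equal to $T_k(a,b):=\sum_j a_{[k+j]_n}b_j$; the analogous computation shows that the coefficient of $x^k$ in $b\hat a$ is $T_k(b,a)$.

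The remaining step is an index manipulation. Extending the reversibility harmlessly by $a_0=a_0$ so that $a_m=a_{[-m]_n}$ for every $m$ (and similarly for $b$), the substitution $j\mapsto[-j]_n$ in $T_k(a,b)$ combined with $b_{[-j]_n}=b_j$ and $a_{[k+j]_n}=a_{[-k-j]_n}$ gives $T_k(a,b)=\sum_j a_{[k-j]_n}b_j=:(a\ast b)_k$, and the same manipulation yields $T_k(b,a)=(b\ast a)_k$. The further substitution $\ell\mapsto[k-\ell]_n$ is a bijection on $\{0,\ldots,n-1\}$ and, together with the commutativity of $\F_q$, it shows $(a\ast b)_k=(b\ast a)_k$. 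Hence $T_k(a,b)=T_k(b,a)$ for every $k$, which is precisely $a\hat b=b\hat a$. The only real obstacle is the bookkeeping of indices modulo $n$ and verifying the extension of the reversibility at $m=0$; everything else is a direct calculation relying solely on $\tau^2=\mathrm{id}$, the dihedral relation, and the commutativity of $\F_q$.
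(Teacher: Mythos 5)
Your proof is correct and follows essentially the same route as the paper's: expand $a\hat b$ and $b\hat a$ explicitly, observe that the automorphism drops out (you via $\tau^2=\mathrm{id}$, the paper via $\theta(x^{[i-j]_n})=\mathrm{id}$ on the cross terms), and match the coefficients of each $x^k$ using the palindromic condition $a_i=a_{[-i]_n}$. Your repackaging of the last step as commutativity of a cyclic convolution is just a tidier version of the paper's term-by-term bijection $s\mapsto[j-s]_n$.
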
 
\begin{proof} Let $a=\sum_{i=0}^{n-1} a_i x^iy\in \Gamma_\theta$ and $b=\sum_{i=0}^{n-1} b_i x^iy \in \Gamma_\theta.$ Then 
$$a\hat{b}=\sum_{i=0}^{n-1} a_i x^iy \sum_{j=0}^{n-1} \theta (x^j y)(b_j) x^jy=\sum_{j=0}^{n-1} \left( \sum_{i=0}^{n-1}a_i \theta(x^j)(b_{[i-j]_n})\right) x^j$$ and
$$b\hat{a}=\sum_{i=0}^{n-1} b_i x^iy \sum_{j=0}^{n-1} \theta (x^j y)(a_j) x^jy=\sum_{j=0}^{n-1} \left( \sum_{i=0}^{n-1}b_i \theta(x^j)(a_{[i-j]_n})\right ) x^j. $$ Since $a, b\in \Gamma_\theta$, then $a_s=a_{[-s]_n}$ and $b_{[j-s]_n}=b_{[s-j]_n}$ for $s\in \{0, \ldots, n-1\}$. Therefore, $\theta(x^j)(a_{[-s]_n})=a_{[-s]_n}=a_s$ and $\theta(x^j)(b_{[s-j]_n})=b_{[s-j]_n}=b_{[j-s]_n}$, which is equivalent to the $s$-th term in $ \sum_{i=0}^{n-1}a_i \theta(x^j)(b_{[i-j]_n})$ coincides with the $[j-s]_n$-th term of $\sum_{i=0}^{n-1}b_i \theta(x^j)(a_{[i-j]_n})$.
\end{proof}

From now on, we will consider a square extension of $\F_q$, i.e., the skew group ring to consider will be $\F_{q^2}^{\theta}D_{2n}$. We have the following result whose proof follows straightforward.

\begin{lemma} \label{homo}Let $\sigma$ be a generator element of the Galois group $\Gal(\F_{q^2},\F_q)$. Then the map $\theta_\sigma: G=D_{2n} \rightarrow \Gal(\F_{q^2},\F_{q})$ defined by $\theta_\sigma(g)=\sigma$ for $g=x^iy$, $i\in\{0, \ldots, n-1\}$ and $\theta_\sigma(g)=1$ otherwise is a group homomorphism.
\end{lemma}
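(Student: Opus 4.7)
The plan is to recognise $\theta_\sigma$ as the composition of the quotient homomorphism $\pi\colon D_{2n}\twoheadrightarrow D_{2n}/C_n$ with an isomorphism $D_{2n}/C_n \xrightarrow{\sim} \Gal(\F_{q^2},\F_q)$, which immediately gives the homomorphism property.

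First I would recall that the cyclic subgroup $C_n=\langle x\rangle$ is normal in $D_{2n}$ (it has index $2$), so the quotient $D_{2n}/C_n$ is cyclic of order $2$, with coset representatives $\{1, y\}$; the rotations $x^i$ form the trivial coset and the reflections $x^i y$ form the nontrivial one. Next I would note that $\Gal(\F_{q^2},\F_q)=\langle\sigma\rangle$ also has order $2$, since $[\F_{q^2}:\F_q]=2$, so $\sigma^2=1$. Hence there is a unique isomorphism $\iota\colon D_{2n}/C_n \to \Gal(\F_{q^2},\F_q)$ sending the nontrivial coset to $\sigma$, and the map $\iota\circ\pi$ is a group homomorphism that sends $x^i\mapsto 1$ and $x^i y\mapsto \sigma$; this is exactly $\theta_\sigma$.

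If one prefers a direct verification without quotienting, I would simply check the four possible product cases using the multiplication rules of $D_{2n}$: products of two rotations, a rotation and a reflection (either order), and two reflections. In each case the type of the product (rotation or reflection) is determined, and compatibility with $\theta_\sigma$ reduces to the single nontrivial identity $\sigma\cdot\sigma=\sigma^2=1$, which holds because $\sigma$ generates a group of order two. There is essentially no obstacle here — the only substantive ingredient is $|\Gal(\F_{q^2},\F_q)|=2$, which forces $\sigma^2=1$ and thereby makes the ``reflection times reflection equals rotation'' case work out.
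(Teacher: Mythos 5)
Your proof is correct, and it supplies in full the routine verification that the paper dismisses with ``can be checked straightforwardly'': the only nontrivial case is reflection times reflection, which works because $\sigma^2=1$ in the order-two group $\Gal(\F_{q^2},\F_q)$. Your quotient-map framing ($\theta_\sigma=\iota\circ\pi$ with $\pi\colon D_{2n}\to D_{2n}/C_n$) is a clean way to package the same check and matches the spirit of the paper's construction, so there is no substantive difference in approach.
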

\begin{proof} 
This assertion can be checked straightforwardly.
\end{proof}

\section{A key exchange protocol}\label{KEP}

This section presents a key exchange protocol based on two-sided multiplications over a skew dihedral group ring. We remark that other works have considered two-sided semi-group actions or matrices over group rings for key exchange \cite{semi,matgroupRings,grouptheory,semigroupac}. However, we follow an alternative approach. Recently, in \cite{CV} the authors have proposed a key exchange protocol using two-sided multiplications over a dihedral twisted group ring. Following their construction, we introduce a similar key exchange protocol over $\F_{q^2}^{\theta}D_{2n}$, with $\theta$ being a suitable group homomorphism.

\subsection{The construction}\label{construction}
We start by setting up our key exchange protocol's public parameters.

\begin{enumerate}
 
 \item Choose $m,n\in \N$ and a prime number $p$ such that $p$ divides $n$. We then set $q=p^m$ and the finite field $\F_{q^2}$.
 \item Choose the map $\theta_\sigma: D_{2n} \rightarrow \Aut(\F_{q^2}) $ as it was defined in Lemma \ref{homo}. In particular, for $g=x^iy$ with $i\in\{0, \ldots, n-1\}$, $\theta_\sigma(g)=\sigma$, where $\sigma(a)=a^q$ for all $a\in \F_{q^2}$, and $\theta_\sigma(g)=1$ otherwise.
 
 \item Choose a random non-zero element $\texttt{h}_1 \in \F_{q^2}^{\theta} C_{n}$ and a random non-zero element $\texttt{h}_2 \in \F_{q^2}^{\theta} C_{n} y$. Set $\texttt{h}=\texttt{h}_1+\texttt{h}_2$ and make $\texttt{h}$ public.
\end{enumerate}

We use the notation introduced in~\cite{SecChannel}. Let $P_i$ and $P_j$ be two parties and $s$ be an identifier for a session. The key exchange protocol between $P_i$ and $P_j$ runs as shown by Protocol~\ref{protocol}.

\begin{algorithm}
\captionsetup{labelformat=empty}
\caption{\centering \textbf{Protocol 1} Our Key Exchange Protocol}
\label{protocol}
\begin{algorithmic}[1]

\State The initiator $P_i$, on input $(P_i, P_j, s)$, chooses a secret pair $(\texttt{a}_i,\gamma_i)\xleftarrow[]{R}\F_{q^2}^{\theta} C_{n} \times \Gamma_{\theta}$, and sends $(P_i, s,\texttt{pk}_{i}=\texttt{a}_i\texttt{h} \gamma_i)$ to $P_j$.
\State Upon receipt of $(P_i,s,\texttt{pk}_{i})$, $P_j$, chooses a secret pair $(\texttt{a}_j,\gamma_j)\xleftarrow[]{R}\F_{q^2}^{\theta} C_{n} \times \Gamma_{\theta}$ and sends $(P_j, s, \texttt{pk}_{j}=\texttt{a}_j \texttt{h} \gamma_j)$ to $P_i$, computes $\texttt{k}_j=\texttt{a}_j \texttt{pk}_{i} \widehat{\gamma_j}$, erases $(\texttt{a}_j,\gamma_j)$ and outputs the key $\texttt{k}_j$ under the session-id $s$.
\State Upon receipt of $(P_j,s,\texttt{pk}_{j})$, $P_i$, computes $\texttt{k}_i=\texttt{a}_i \texttt{pk}_{j} \widehat{\gamma_i}$, erases $(\texttt{a}_i,\gamma_i)$ and outputs the key $\texttt{k}_i$ under the session-id $s$.
\end{algorithmic}

\end{algorithm}

Note that if both $P_i$ and $P_j$ are uncorrupted during the exchange of the key and both complete the protocol for session-id $s$, then they both establish the same key. 
Because of the choice of $\theta_{\sigma}$, by Lemma \ref{conm} and Lemma \ref{homo} 
$$
 \texttt{k}_i=\texttt{a}_i \texttt{pk}_{j} \widehat{\gamma_i}= \texttt{a}_i \texttt{a}_j\texttt{h} \gamma_j \widehat{\gamma_i}=\texttt{a}_j \texttt{a}_i\texttt{h}\gamma_i \widehat{\gamma_j}=\texttt{a}_j \texttt{pk}_{i} \widehat{\gamma_j}=\texttt{k}_{j}
$$



\subsection{Intractability assumptions}
\label{InAssum}
With the notation above, let $\texttt{h}=\texttt{h}_1 +\texttt{h}_2$ be a public element in $\F_{q^2}^{\theta} D_{2n}$, where $\texttt{h}_1$ is a random non-zero element from $ \F_{q^2}^{\theta} C_{n}$ and $\texttt{h}_2$ is a random non-zero element from $\F_{q^2}^{\theta} C_{n} y$. We now present attack games~\cite{games2,games} for algebraic problems related to the security of our key exchange protocol. 


\begin{agame}[Skew Dihedral Product Decomposition]
\label{DPD}
 For a given adversary $\mathcal{A}$, we define the following attack game:
\begin{itemize}
 \item The challenger computes 
 \begin{algorithmic}[1]
 \State $(\texttt{a},\gamma)\xleftarrow[]{R}\F_{q^2}^{\theta} C_{n} \times \Gamma_{\theta}$;
 \State $\texttt{pk}\gets \texttt{a} \texttt{h}\gamma$;
 \end{algorithmic}
 and gives the value of $\texttt{pk}$ to the adversary.
 
 \item The adversary outputs $(\widetilde{\texttt{a}},\widetilde{\gamma}) \in \F_{q^2}^{\theta} C_{n} \times \Gamma_{\theta}.$

\end{itemize}

 We define $\mathcal{A}$'s advantage in solving the Skew Dihedral Product Decomposition Problem for $\F_{q^2}^{\theta} D_{2n}$, denoted $\textrm{SDPDadv}[\mathcal{A}, \F_{q^2}^{\theta} D_{2n}]$, as the probability that $\widetilde{\texttt{a}}\texttt{h}\widetilde{\gamma} =\texttt{a} \texttt{h}\gamma$. 

\end{agame}

\begin{definition}[Skew Dihedral Product Decomposition Assumption]We say that the Skew Dihedral Product Decomposition
(SDPD) assumption holds for $\F_{q^2}^{\theta} D_{2n}$ if for all efficient adversaries $\mathcal{A}$ the quantity $\textrm{SDPDadv}[\mathcal{A}, \F_{q^2}^{\theta} D_{2n}]$ is negligible.

\end{definition}

\begin{agame}[ Computational Skew Dihedral Product]
\label{CDP}
For a given adversary $\mathcal{A}$, we define the following attack game:
\begin{itemize}
 \item The challenger computes 
 \begin{algorithmic}[1]
 \State $(\texttt{a}_1,\gamma_1)\xleftarrow[]{R}\F_{q^2}^{\theta} C_{n} \times \Gamma_{\theta};$
 \State $(\texttt{a}_2,\gamma_2)\xleftarrow[]{R}\F_{q^2}^{\theta} C_{n} \times \Gamma_{\theta};$
 \State $\texttt{pk}_1\gets a_1 \texttt{h} \gamma_1;$
 \State $\texttt{pk}_2\gets a_2 \texttt{h} \gamma_2;$
 \State $\texttt{k}\gets \texttt{a}_2 \texttt{pk}_1 \widehat{\gamma}_2;$
 
 \end{algorithmic}
 and gives the values of $\texttt{pk}_1$ and $\texttt{pk}_2$ to the adversary.
 \item The adversary outputs some $\widetilde{\texttt{k}}\in \F_{q^2}^{\theta} D_{2n} $

\end{itemize}

 \noindent We define $\mathcal{A}$'s advantage in solving the Computational Skew Dihedral Product (CSDP) Problem for $\F_{q^2}^{\theta} D_{2n}$, denoted $\textrm{CSDPadv}[\mathcal{A}, \F_{q^2}^{\theta} D_{2n}]$, as the probability that $\widetilde{\texttt{k}}=\texttt{k}$.

\end{agame}

\begin{definition}[Computational Skew Dihedral Product Assumption]We say that the Computational Skew Dihedral Product
(CSDP) assumption holds for $\F_{q^2}^{\theta} D_{2n}$ if for all efficient adversaries $\mathcal{A}$ the quantity $\textrm{CSDPadv}[\mathcal{A}, \F_{q^2}^{\theta} D_{2n}]$ is negligible.
\end{definition}

\begin{lemma}If the SDPD assumption does not holds for $\F_{q^2}^{\theta} D_{2n}$, then CSDP assumption does not holds for $\F_{q^2}^{\theta} D_{2n}$.

\end{lemma}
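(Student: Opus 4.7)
The plan is to prove the contrapositive by a standard reduction: if there is an efficient adversary $\mathcal{B}$ with non-negligible advantage $\varepsilon = \textrm{SDPDadv}[\mathcal{B}, \F_{q^2}^{\theta} D_{2n}]$, I will use $\mathcal{B}$ as a subroutine to build an efficient adversary $\mathcal{A}$ against CSDP with advantage at least $\varepsilon$.

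The reduction $\mathcal{A}$ works as follows. On input $(\texttt{pk}_1,\texttt{pk}_2)$ from the CSDP challenger, $\mathcal{A}$ forwards $\texttt{pk}_2$ to $\mathcal{B}$. Since $\texttt{pk}_2 = \texttt{a}_2 \texttt{h} \gamma_2$ with $(\texttt{a}_2,\gamma_2)$ chosen uniformly from $\F_{q^2}^{\theta} C_n \times \Gamma_\theta$, the distribution seen by $\mathcal{B}$ is exactly that of the SDPD game, so with probability $\varepsilon$ it returns a pair $(\widetilde{\texttt{a}},\widetilde{\gamma}) \in \F_{q^2}^{\theta} C_n \times \Gamma_\theta$ with $\widetilde{\texttt{a}} \texttt{h} \widetilde{\gamma} = \texttt{pk}_2$. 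Then $\mathcal{A}$ outputs $\widetilde{\texttt{k}} := \widetilde{\texttt{a}} \, \texttt{pk}_1 \, \widehat{\widetilde{\gamma}}$.

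The key step is to verify that $\widetilde{\texttt{k}} = \texttt{k}$ whenever $\mathcal{B}$ succeeds. Expanding $\texttt{pk}_1 = \texttt{a}_1 \texttt{h}\gamma_1$, I apply the commutation identity of Lemma \ref{conm} (which is applicable because $\gamma_1,\widetilde{\gamma},\gamma_2 \in \Gamma_\theta$ and $\theta_\sigma$ fixes every element on $C_n$ by Lemma \ref{homo}) to obtain
\[
\widetilde{\texttt{a}} \, \texttt{a}_1 \texttt{h} \gamma_1 \, \widehat{\widetilde{\gamma}} \;=\; \texttt{a}_1 \, \widetilde{\texttt{a}} \texttt{h} \widetilde{\gamma} \, \widehat{\gamma_1} \;=\; \texttt{a}_1 \, \texttt{pk}_2 \, \widehat{\gamma_1},
\]
and the same identity applied with the genuine $(\texttt{a}_2,\gamma_2)$ yields $\texttt{k} = \texttt{a}_2 \texttt{pk}_1 \widehat{\gamma_2} = \texttt{a}_1 \texttt{pk}_2 \widehat{\gamma_1}$. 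Hence $\widetilde{\texttt{k}} = \texttt{k}$.

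Therefore $\textrm{CSDPadv}[\mathcal{A}, \F_{q^2}^{\theta} D_{2n}] \geq \textrm{SDPDadv}[\mathcal{B}, \F_{q^2}^{\theta} D_{2n}]$, which is non-negligible, contradicting the CSDP assumption. The main potential obstacle is a clean invocation of Lemma \ref{conm}, but since the public parameters and both $\gamma_i$'s are set up precisely so its hypotheses hold, the computation goes through without further work.
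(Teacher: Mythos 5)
Your reduction is correct and is exactly the ``straightforward'' argument the paper has in mind (its own proof is omitted): you recover $(\widetilde{\texttt{a}},\widetilde{\gamma})$ from $\texttt{pk}_2$, note the input distribution matches the SDPD game, and reuse the key-agreement correctness computation from Subsection~\ref{construction} (Lemmas~\ref{conm} and~\ref{homo}, plus commutativity of $\F_{q^2}^{\theta}C_n$) to show $\widetilde{\texttt{a}}\,\texttt{pk}_1\,\widehat{\widetilde{\gamma}}=\texttt{a}_1\texttt{pk}_2\widehat{\gamma_1}=\texttt{k}$. No gaps.
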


\begin{proof} 
This assertion can be checked straightforwardly.
\end{proof}

\begin{agame}[Decisional Skew Dihedral Product]
\label{DSDP}
 For a given adversary $\mathcal{A}$, we define two experiments:

\vspace{2mm}

\noindent \textbf{Experiment} $\texttt{b}$

\begin{itemize}
 \item The challenger computes 
 
 \begin{algorithmic}[1]
 \State $(\texttt{a}_1,\gamma_1)\xleftarrow[]{R}\F_{q^2}^{\theta} C_{n} \times \Gamma_{\theta};$
 \State $(\texttt{a}_2,\gamma_2)\xleftarrow[]{R}\F_{q^2}^{\theta} C_{n} \times \Gamma_{\theta};$
 \State $(\texttt{a}_3,\gamma_3)\xleftarrow[]{R}\F_{q^2}^{\theta} C_{n} \times \Gamma_{\theta};$
 \State $\texttt{pk}_1 \gets \texttt{a}_1 \texttt{h} \gamma_1$; $\texttt{pk}_2 \gets \texttt{a}_2 \texttt{h} \gamma_2;$
 \State $\texttt{k}_0\gets \texttt{a}_2 \texttt{pk}_1 \widehat{\gamma}_2;~\texttt{k}_1 \gets \texttt{a}_3 \texttt{h} \gamma_3;$
 \end{algorithmic}
 and gives the triple $(\texttt{pk}_1,\texttt{pk}_2,\texttt{k}_{\texttt{b}})$ to the adversary.
 
 \item The adversary outputs a bit $\widetilde{\texttt{b}} \in \{0,1\} $

\end{itemize}

 \noindent Let $W_{\texttt{b}}$ is the event that $\mathcal{A}$ outputs $1$ in experiment $\texttt{b}$. We define $\mathcal{A}$'s advantage in solving the Decisional Skew Dihedral Product Problem for $\F_{q^2}^{\theta} D_{2n}$ as
 $$\textrm{DSDPadv}[\mathcal{A}, \F_{q^2}^{\theta} D_{2n}]=|\textrm{Pr}[W_0]-\textrm{Pr}[W_1]|.$$ 

\end{agame}

\begin{definition}[Decisional Skew Dihedral Product Assumption]We say that the Decisional Skew Dihedral Product
(DSDP) assumption holds for $\F_{q^2}^{\theta} D_{2n}$ if for all efficient adversaries $\mathcal{A}$ the quantity $\textrm{DSDPadv}[\mathcal{A},\F_{q^2}^{\theta} D_{2n}]$ is negligible.
\end{definition}
Note that $\texttt{h}$ is chosen as $\texttt{h}=\texttt{h}_1 +\texttt{h}_2$, with $\texttt{h}_1$ being a random non-zero element from $ \F_{q^2}^{\theta} C_{n}$ and $\texttt{h}_2$ being a random non-zero element from $\F_{q^2}^{\theta} C_{n} y$, to not let the attacker win the DSDP Game trivially. Indeed if $\texttt{h}$ is chosen as $\texttt{h}=\texttt{h}_1+\texttt{0}$ with $\texttt{h}_1 \in \F_{q^2}^{\theta} C_{n}$, then $\texttt{k}_0 \in \F_{q^2}^{\theta} C_{n}$ and $\texttt{k}_1 \in \F_{q^2}^{\theta} C_{n}y$ by Lemma~\ref{lemmap}. Similarly if $\texttt{h}$ is chosen as $\texttt{h}=\texttt{0}+\texttt{h}_2$ with $\texttt{h}_2 \in \F_{q^2}^{\theta} C_{n}y$, then $\texttt{k}_0 \in \F_{q^2}^{\theta} C_{n}y$ and $\texttt{k}_1 \in \F_{q^2}^{\theta} C_{n}$ by Lemma~\ref{lemmap}. Therefore the attacker can win the DSDP Game for both cases with non-negligible probability. Additionally, we have the following.
\begin{lemma}If the CSDP assumption
does not holds for $\F_{q^2}^{\theta} D_{2n}$, then DSDP assumption does not holds for $\F_{q^2}^{\theta} D_{2n}$.
\end{lemma}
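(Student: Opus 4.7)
The plan is to prove the contrapositive by the standard CSDP-to-DSDP reduction. Assume there is an efficient adversary $\mathcal{B}$ against CSDP with non-negligible advantage $\epsilon=\textrm{CSDPadv}[\mathcal{B}, \F_{q^2}^{\theta} D_{2n}]$. I will build from $\mathcal{B}$ an efficient adversary $\mathcal{A}$ against DSDP of non-negligible advantage.

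The construction is as follows. On input a DSDP challenge $(\texttt{pk}_1,\texttt{pk}_2,\texttt{k}_{\texttt{b}})$, $\mathcal{A}$ invokes $\mathcal{B}$ on $(\texttt{pk}_1,\texttt{pk}_2)$ to obtain $\widetilde{\texttt{k}}\in \F_{q^2}^{\theta} D_{2n}$, and then outputs the bit $0$ if $\widetilde{\texttt{k}}=\texttt{k}_{\texttt{b}}$ and $1$ otherwise. Note that the pair $(\texttt{pk}_1,\texttt{pk}_2)$ handed to $\mathcal{B}$ is distributed exactly as in the CSDP game, so $\mathcal{B}$'s output still satisfies the CSDP advantage guarantee.

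The analysis has two cases. In experiment $\texttt{b}=0$, we have $\texttt{k}_0=\texttt{a}_2\texttt{pk}_1\widehat{\gamma}_2$, which is precisely the value $\mathcal{B}$ is trying to compute from $(\texttt{pk}_1,\texttt{pk}_2)$. Hence $\Pr[\widetilde{\texttt{k}}=\texttt{k}_0]\geq \epsilon$, so $\Pr[W_0]=\Pr[\mathcal{A}\text{ outputs }1\mid \texttt{b}=0]\leq 1-\epsilon$. In experiment $\texttt{b}=1$, the element $\texttt{k}_1=\texttt{a}_3\texttt{h}\gamma_3$ is sampled using fresh randomness $(\texttt{a}_3,\gamma_3)$ independent of $\mathcal{B}$'s view, so conditioned on $\widetilde{\texttt{k}}$ the collision $\widetilde{\texttt{k}}=\texttt{k}_1$ occurs with probability at most the maximum point-mass of the distribution of $\texttt{k}_1$, which is negligible since $\F_{q^2}^{\theta} D_{2n}$ has exponential size and the sampling of $(\texttt{a}_3,\gamma_3)$ is over a set of super-polynomial size. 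Thus $\Pr[W_1]\geq 1-\mathrm{negl}$, and
$$\textrm{DSDPadv}[\mathcal{A},\F_{q^2}^{\theta} D_{2n}]=|\Pr[W_0]-\Pr[W_1]|\geq \epsilon-\mathrm{negl},$$
which is non-negligible, contradicting the DSDP assumption.

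The only subtle point, and the main obstacle, is justifying that $\Pr[\widetilde{\texttt{k}}=\texttt{k}_1]$ is negligible in the second experiment: one must observe that the output distribution of $\texttt{k}_1=\texttt{a}_3\texttt{h}\gamma_3$ has sufficiently high min-entropy so that no fixed element (such as the one produced by $\mathcal{B}$) is hit with non-negligible probability. For the parameter regimes considered in the paper this is immediate from the cardinalities of $\F_{q^2}^{\theta} C_n$ and $\Gamma_\theta$; everything else is a routine bookkeeping of probabilities.
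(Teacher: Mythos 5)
Your reduction is the standard one and is surely what the paper has in mind: the paper's own ``proof'' of this lemma is the single sentence ``This assertion can be checked straightforwardly,'' so your write-up supplies all of the actual content. The construction (run the CSDP solver $\mathcal{B}$ on $(\texttt{pk}_1,\texttt{pk}_2)$, which is distributed exactly as in the CSDP game, and output $0$ iff its answer equals the challenge key) and the bookkeeping giving $\textrm{DSDPadv}[\mathcal{A},\F_{q^2}^{\theta}D_{2n}]=|\Pr[W_0]-\Pr[W_1]|\geq \epsilon-\Pr[\widetilde{\texttt{k}}=\texttt{k}_1]$ are correct.

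The one step that is not actually justified is the one you flag yourself: the claim that $\Pr[\widetilde{\texttt{k}}=\texttt{k}_1]$ is negligible does \emph{not} follow from the cardinalities of $\F_{q^2}^{\theta}C_n$ and $\Gamma_{\theta}$ being super-polynomial --- a map on a huge domain can still have a highly concentrated image, so ``the sampling is over a set of super-polynomial size'' is not a valid inference about the min-entropy of $\texttt{a}_3\texttt{h}\gamma_3$. What actually controls this is structure rather than size: for fixed $\gamma_3$, the skew multiplication $a_g g\cdot b_h h=a_g\bigl(\theta(g)(b_h)\bigr)gh$ is $\F_{q^2}$-linear in the left factor, so $\texttt{a}_3\mapsto \texttt{a}_3(\texttt{h}\gamma_3)$ is a linear map on $\F_{q^2}^{\theta}C_n$ and $\texttt{k}_1$ is uniform on its image, with every point mass equal to $q^{-2r}$ where $r$ is the rank of right multiplication by $\texttt{h}\gamma_3$. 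One must still rule out that $\texttt{h}\gamma_3$ is a zero divisor of very small rank for a non-negligible fraction of $\gamma_3$; this is plausible for the proposed parameters but is a genuine hypothesis that should be stated (or proved) rather than attributed to cardinality. Since the paper omits the argument entirely, your proof is strictly more informative than the paper's, but as written the final inequality rests on an unproven min-entropy claim.
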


\begin{proof} 
This assertion can be checked straightforwardly.
\end{proof}

\subsection{The Hardness of the SDPD Problem}

The authors of~\cite{CV} provide an algorithmic and algebraic analysis on the Dihedral Product Decomposition (DPD) Problem, which is the underlying problem associated with the security of their constructions. In particular, 
let $\F_q^{\alpha}D_{2n}$ be a twisted dihedral group algebra, where the 2-cocycle $\alpha_{\lambda}: D_{2n} \times D_{2n} \longrightarrow \F_q^* $ is defined by $\alpha_{\lambda}(g,h)=\lambda$ (a non-square in $\F_q$) for $g=x^iy$, $h=x^jy$ with $i,j \in\{0, \ldots, n-1\}$ and $\alpha_{\lambda}(g,h)=1$ otherwise. In~\cite{CV}, the authors demonstrate that $\F_q^{\alpha}D_{2n}=\F_q^{\alpha}C_{n}\oplus \F_q^{\alpha}C_{n}y$ as direct sum of $\F$-vector spaces, and also define $\Gamma_{\alpha_\lambda} \subseteq \F_q^{\alpha}C_{n}y$ in a similar way.
 
The DPD attack game is defined as follows. Let $\texttt{h}=\texttt{h}_1 +\texttt{h}_2$ be a public element in $\F_q^{\alpha_{\lambda}} D_{2n}$, where $\texttt{h}_1$ is a random non-zero element from $ \F_q^{\alpha} C_{n}$ and $\texttt{h}_2$ is a random non-zero element from $\F_q^{\alpha_{\lambda}} C_{n} y$. For a given adversary $\mathcal{A}$, 
\begin{itemize}
 \item The challenger computes 
 \begin{algorithmic}[1]
 \small
 \State $(\texttt{a},\gamma)\xleftarrow[]{R}\F_q^{\alpha_{\lambda}} C_{n} \times \Gamma_{\alpha_\lambda}$;
 \State $\texttt{pk}\gets \texttt{a} \texttt{h}\gamma$;
 \end{algorithmic}
 and gives the value of $\texttt{pk}$ to the adversary.
 
 \item The adversary outputs $(\widetilde{\texttt{a}},\widetilde{\gamma}) \in \F_q^{\alpha_{\lambda}} C_{n} \times \Gamma_{\alpha_\lambda}.$

\end{itemize}

The $\mathcal{A}$'s advantage in solving the Dihedral Product Decomposition Problem for $\F_q^{\alpha_{\lambda}} D_{2n}$ is defined as the probability that $\widetilde{\texttt{a}}\texttt{h}\widetilde{\gamma} =\texttt{a} \texttt{h}\gamma$. 

The authors of~\cite{CV} analyze how an adversary, with access to a quantum computer, may leverage it to try to solve the DPD problem by exploiting quantum algorithms (e.g., Grover's algorithm and Shor's algorithm)~\cite{quantCrypto}. Also, they analyze possible algebraic attacks on DPD problems and hence propose choosing their constructions' public parameters to avoid that an adversary may leverage those algebraic techniques, such as~\cite{twosided}, to solve the DPD problem. We remark that since the DPD problem and SDPD problem are very similar, such an algebraic and algorithmic analysis for the DPD problem presented in~\cite{CV} may be adapted easily to the SDPD problem. However, we note that adjusting such an analysis to the SDPD problem does not mean that both DPD and SDPD problems are computationally equivalent. It is indeed an open question to prove whether these problems are computationally equivalent or not.

\subsection{Security analysis in the authenticated-links adversarial model}

This subsection is devoted to analysing further our key exchange protocol in a appropriate security model~\cite{texchange1,texchange2,SecChannel}. In particular, we aim at proving that our protocol is session-key
secure in the authenticated-links adversarial model (AM) of Canetti and Krawczyk~\cite{SecChannel}, assuming the DSDP assumption holds for $\F_{q^2}^{\theta}$. We first recall the definition of session-key security in the authenticated-links adversarial model of Canetti and
Krawczyk~\cite{SecChannel}, and follow the description given in~\cite{CV}. 

\begin{enumerate}
    \item Let $P=\{P_1, P_2, \ldots, P_n\}$ be a finite set of parties.
    \item Let $\mathcal{A}$ be an adversary that controls all communication between two parties, however 
    \begin{itemize}
        \item  $\mathcal{A}$ is not allowed to inject or modify messages, except for messages sent by corrupted parties or sessions.
        
        \item $\mathcal{A}$ may choose not to forward a message at all, but if $\mathcal{A}$ chooses to forward a message $m$, $\mathcal{A}$ has to send it to the correct destination for $m$, only once and without modifying $m$.
        
        \item Parties give outgoing messages to $\mathcal{A}$, who has control over their delivery via the \texttt{Send} query. $\mathcal{A}$ can activate a party $P_i$ by \texttt{Send} queries, i.e. the adversary has control over the creation of protocol sessions, which take place within each party. Two sessions $s_1$ and $s_0$
are matching if the outgoing messages of one are the incoming messages of the other, and vice versa. Additionally, $\mathcal{A}$ is allowed to query the oracles \texttt{SessionStateReveal}, \texttt{SessionKeyReveal}, and \texttt{Corrupt}.
    
    \begin{itemize}
 \item If $\mathcal{A}$ query 
the \texttt{SessionStateReveal} oracle for a specified session-id $s$ within some party $P_i$, then $\mathcal{A}$ obtains the contents of the specified session-id $s$ within $P_i$, including any secret information. This event is noted and hence produces no further output. 

\item If $\mathcal{A}$ query the \texttt{SessionKeyReveal} for a specified session-id $s$, then $\mathcal{A}$ obtains the session key for the specified session $s$, assuming that $s$ has an associated session.

\item If $\mathcal{A}$ query the \texttt{Corrupt} oracle for a specified party $P_i$ , then $\mathcal{A}$ takes over the party $P_i$, i.e. $\mathcal{A}$ has access to all
information in $P_i$'s memory, including long-lived keys and any session-specific information still stored. A
corrupted party produces no further output.
\end{itemize}

    \item Finally $\mathcal{A}$ is given access to the \texttt{test} oracle, which can be queried once and at any stage to a completed, fresh,
unexpired session-id $s$. On input $s$, the \texttt{test} oracle chooses $b\xleftarrow[]{R}\{0,1\}$, then it outputs the session key for the specified session-id $s$ if $b=0$. Otherwise, it returns a random value in the key space. Also, $\mathcal{A}$ can issue subsequent queries as desired, with the exception that it cannot expose the test session. At any point, the adversary can try to guess $b$. Let $\textrm{Guess}[\mathcal{A}, \F_{q^2}^{\theta} D_{2n}]$ be the event that $\mathcal{A}$ correctly guesses $b$, and define the advantage $\textrm{SKAdv}[\mathcal{A}, \F_{q^2}^{\theta} D_{2n}]= |\textrm{Guess}[\mathcal{A}, \F_{q^2}^{\theta} D_{2n}]-1/2|$.

   \end{itemize}

\end{enumerate}

\begin{theorem} If the DSDP assumption
holds for $\F_{q^2}^{\theta} D_{2n}$, then our key exchange protocol is session-key secure in the the authenticated-links adversarial model, i.e. for any $\mathcal{A}$ in the authenticated-links adversarial model (AM), then the following holds 
\begin{enumerate}
 \item The key-exchange protocol satisfies the property that if two uncorrupted parties complete matching sessions, then they both output the same key.
 \item $\textrm{SKAdv}[\mathcal{A}, \F_{q^2}^{\theta} D_{2n}]$ is negligible.
\end{enumerate}

\end{theorem}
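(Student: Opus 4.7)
Part~(1), correctness, is essentially the calculation already carried out at the end of Section~\ref{construction}: using that $\F_{q^2}^{\theta}C_n$ is commutative (because $\theta_\sigma$ restricted to $C_n$ is trivial by Lemma~\ref{homo}) together with $\gamma_i\widehat{\gamma_j}=\gamma_j\widehat{\gamma_i}$ for $\gamma_i,\gamma_j\in\Gamma_\theta$ (Lemma~\ref{conm}), one rewrites
\[
\texttt{k}_i=\texttt{a}_i\,\texttt{a}_j\,\texttt{h}\,\gamma_j\widehat{\gamma_i}=\texttt{a}_j\,\texttt{a}_i\,\texttt{h}\,\gamma_i\widehat{\gamma_j}=\texttt{k}_j,
\]
which gives the identity of session keys whenever two uncorrupted parties complete matching sessions. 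I would simply restate this computation for completeness.

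For part~(2) the plan is the textbook reduction from session-key security in the AM to a DDH-style assumption, following Canetti--Krawczyk~\cite{SecChannel} and mirroring the argument of~\cite{CV}. Let $\mathcal{A}$ be an efficient AM adversary of advantage $\epsilon=\textrm{SKAdv}[\mathcal{A},\F_{q^2}^{\theta}D_{2n}]$ and let $N$ be a polynomial upper bound on the number of sessions it activates. I build a distinguisher $\mathcal{D}$ for Game~\ref{DSDP} as follows. Given a challenge $(\texttt{pk}_1^*,\texttt{pk}_2^*,\texttt{k}^*)$, $\mathcal{D}$ samples an ordered pair of session identifiers $(s_1^*,s_2^*)$ uniformly at random as its guess for the test session together with its matching session, and simulates the AM for $\mathcal{A}$: every \texttt{Send} query is answered honestly in all sessions except $s_1^*$ and $s_2^*$, where $\mathcal{D}$ instead emits $\texttt{pk}_1^*$ and $\texttt{pk}_2^*$ respectively as the party's outgoing element. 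For any session other than the two embedded ones, $\mathcal{D}$ knows all secret state and answers \texttt{SessionStateReveal}, \texttt{SessionKeyReveal} and \texttt{Corrupt} queries faithfully; any such query targeting $s_1^*$ or $s_2^*$ forces $\mathcal{D}$ to abort, as does a \texttt{test} query directed at a session different from $s_1^*$. On a \texttt{test} query at $s_1^*$, $\mathcal{D}$ returns $\texttt{k}^*$, and at the end it outputs the bit guessed by $\mathcal{A}$.

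Conditioned on a correct guess — which by the freshness rules of the test session happens with probability at least $1/N^2$ — the view of $\mathcal{A}$ is distributed exactly as in Experiment~$\texttt{b}$ of Game~\ref{DSDP}, so $\textrm{DSDPadv}[\mathcal{D},\F_{q^2}^{\theta}D_{2n}]\geq \epsilon/N^2$, and the DSDP assumption then forces $\epsilon$ to be negligible. The main obstacle, typical for this style of proof, is not any single algebraic step but the careful accounting of every interleaving of \texttt{Send}, \texttt{SessionStateReveal}, \texttt{SessionKeyReveal}, \texttt{Corrupt} and \texttt{test} calls across sessions: I must verify that the freshness constraint on the test session (and on its matching session, when one exists) prevents $\mathcal{A}$ from ever touching $s_1^*$ or $s_2^*$ outside of \texttt{Send}, so that $\mathcal{D}$ never needs to produce the secret coins $(\texttt{a},\gamma)$ underlying $\texttt{pk}_1^*$ or $\texttt{pk}_2^*$, and that the initiator/responder symmetry of Protocol~\ref{protocol} makes the embedding consistent regardless of the order in which $\mathcal{A}$ forwards the two messages of the test session.
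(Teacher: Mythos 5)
Your proposal is correct and follows essentially the same route as the paper: part (1) is the key-agreement computation from the construction section, and part (2) is the standard guess-and-embed reduction that plants the DSDP challenge $(\texttt{pk}_1,\texttt{pk}_2,\texttt{k})$ into a randomly guessed test session and forwards $\mathcal{A}$'s bit. The only differences are cosmetic: the paper guesses a single session index out of $l$ (losing a factor $1/l$ rather than your $1/N^2$ from guessing an ordered pair) and outputs a random bit instead of aborting when the guess is wrong, while you are somewhat more explicit about the freshness bookkeeping for the oracle queries.
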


\begin{proof}
The proof of this theorem is an adaptation of the proof, given in~\cite{CV},  for the key-exchange protocol over a twisted dihedral group algebra $\F_q^{\alpha}D_{2n}$.

\begin{enumerate}
  
    \item The proof of the first statement is given at the end of the Subsection~\ref{construction}.



\item  To prove this statement, we proceed by contradiction. Let us suppose that there is an adversary $\mathcal{A}$ in the authentication-links model against our protocol that has a non-negligible advantage $\epsilon$ in guessing the bit $b$ chosen by the \texttt{test} oracle (when queried). Let $l$ be an upper bound on the number of sessions invoked by $\mathcal{A}$ in any interaction. We now present a distinguisher $\mathcal{D}$ for the DSDP problem.\\

\begin{algorithmic}[1]
\small
\Function{$\mathcal{D}$}{$h,\F_{q^2}^{\theta} D_{2n},\texttt{pk}_1,\texttt{pk}_2,\texttt{k}$}

\State $r\xleftarrow[]{R}\{1,\ldots, l\}$;
\State Invoke $\mathcal{A}$ on a simulated interaction in the AM with parties $P_1,\ldots, P_n$, except for the $r_{th}$ session;

\State For the $r$-th session, let $P_i$ send $(P_i, s, \texttt{pk}_{i}=\texttt{a}_i \texttt{h} \gamma_i)$ to $P_j$ and let $P_j$ send $(P_j, s, \texttt{pk}_{j}=\texttt{a}_j \texttt{h} \gamma_j)$ to $P_i$;

\If{the $r$-th session is selected by $\mathcal{A}$ as the test session}
\State Give $\texttt{k}$ to $\mathcal{A}$ as the answer to his query;
\State $d\gets \mathcal{A}(\texttt{k})$;
\Else

\State $d\xleftarrow[]{R}\{0,1\}$;
\EndIf
\State \Return $d$
\EndFunction

\end{algorithmic}

On the one hand, let us suppose that $\mathcal{A}$ picks the $r$-th as the test session, then $\mathcal{A}$ is provided with either $\texttt{k}_{\texttt{0}}$ or $\texttt{k}_{\texttt{1}}$, since the DSDP challenger gives either of the two keys to $\mathcal{D}$. Therefore, the probability that $\mathcal{A}$ correctly distinguishes is $1/2+\epsilon$ 
with non-negligible $\epsilon$ (by assumption). On the other hand,  assume that $\mathcal{A}$ does not choose the $r$-th as the test session, then $\mathcal{D}$ always returns a random bit, and hence the distinguishing probability for the input is $1/2$.

Note that the probability that the test session and the $r$-th session coincide is $1/l$. So these do not coincide with probability  $1-1/l$. Hence the overall probability for $\mathcal{D}$ to win the DSDP Game is $1/(2l)+\epsilon/l+1/2-1/(2l)=1/2+\epsilon/l$, which is non-negligible.
\end{enumerate}
\end{proof}

\section{Probabilistic Public Key Encryption}\label{V}

We now present a probabilistic public key encryption based on the key exchange protocol introduced in Section~\ref{KEP}. Following the notation above, choose a random non-zero element $\texttt{h}_1 \in \F_{q^2}^{\theta} C_{n}$ and a random non-zero element $\texttt{h}_2 \in \F_{q^2}^{\theta} C_{n} y$. Set $\texttt{h}=\texttt{h}_1+\texttt{h}_2$ and make $\texttt{h}$ public.


Let $\mathcal{SK}=\F_{q^2}^{\theta}C_n \times \Gamma_{\theta}$ be the secret key space, $\mathcal{PK}=\F_{q^2}^{\theta}D_{2n}$ be the public key space, $\mathcal{M}=\F_{q^2}^{\theta}D_{2n}$ be the message space, and $\mathcal{C}=\F_{q^2}^{\theta}D_{2n}$ the cipher-text space. We now define the public key encryption scheme $\mathcal{E}=(\texttt{Gen},\texttt{Enc}, \texttt{Dec})$.

\begin{algorithm}[H]

\caption{ Key Generation Algorithm}
\label{keygen}
\begin{algorithmic}[1]
\Function{\texttt{Gen}}{$\texttt{h} \in \F_{q^2}^{\theta} D_{2n}$}

\State $(\texttt{a}_1,\gamma_1) \xleftarrow[]{R} \mathcal{SK};$ 
\State $\texttt{pk} \gets \texttt{a}_1 \texttt{h} \gamma_1;$ 
\State $\texttt{sk} \gets (\texttt{a}_1,\gamma_1);$

\State \Return $\texttt{pk},\texttt{sk};$
\EndFunction
\end{algorithmic}
\end{algorithm}

\begin{algorithm}[H]

\caption{ Encryption Algorithm}
\label{encryption}
\begin{algorithmic}[1]
\Function{\texttt{Enc}}{$\texttt{m} \in \mathcal{M}, \texttt{pk} \in \mathcal{PK}, \texttt{r}_2 \in \mathcal{SK},\texttt{h} \in \F_{q^2}^{\theta} D_{2n}$}

\State $(\texttt{a}_2,\gamma_2)\gets \texttt{r}_2 ;$ 
\State $\texttt{c}_1 \gets \texttt{a}_2 \texttt{h} \gamma_2;$ 
\State $\texttt{c}_2 \gets \texttt{m}+ \texttt{a}_2 \texttt{pk} \widehat{\gamma}_2;$ 
\State $\texttt{c} \gets (\texttt{c}_1,\texttt{c}_2);$

\State \Return $\texttt{c};$
\EndFunction
\end{algorithmic}
\end{algorithm}

\begin{algorithm}[H]

\caption{ Decryption Algorithm}
\label{decryption}
\begin{algorithmic}[1]
\Function{\texttt{Dec}}{$\texttt{c} \in \mathcal{C},\texttt{sk} \in \mathcal{SK}$}
\State $(\texttt{a}_1,\gamma_1) \gets \texttt{sk};$
\State $(\texttt{c}_1,\texttt{c}_2) \gets \texttt{c};$
\State $\texttt{k} \gets \texttt{a}_1 \texttt{c}_1 \widehat{\gamma}_1 ;$ 
\State $\texttt{m} \gets \texttt{c}_2 - \texttt{k};$

\State \Return $\texttt{m};$
\EndFunction
\end{algorithmic}
\end{algorithm}

\begin{lemma}[Correctness]Let \texttt{h} be a public element in $\F_{q^2}^{\theta} D_{2n}$. Consider the encryption scheme $\mathcal{E}$ constructed above. For any message $\texttt{m} \in \mathcal{M}$, $\texttt{r}_{2}\xleftarrow[]{R} \mathcal{SK}$ and $(\texttt{pk}, \texttt{sk}) \gets \texttt{Gen}(\texttt{h}),$ it holds that $\texttt{m} \gets \texttt{Dec}(\texttt{Enc}(\texttt{m},\texttt{pk}, \texttt{r}_2, \texttt{h}),\texttt{sk})$

\end{lemma}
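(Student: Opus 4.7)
The plan is to reduce correctness to the same algebraic identity that was already used to establish consistency of the key exchange protocol at the end of Subsection~\ref{construction}. Unwinding the definitions, $\texttt{Dec}(\texttt{Enc}(\texttt{m},\texttt{pk},\texttt{r}_2,\texttt{h}),\texttt{sk}) = \texttt{c}_2 - \texttt{a}_1 \texttt{c}_1 \widehat{\gamma}_1 = \texttt{m} + \texttt{a}_2 \texttt{pk} \widehat{\gamma}_2 - \texttt{a}_1 \texttt{c}_1 \widehat{\gamma}_1$, so correctness boils down to proving the single equality
$$\texttt{a}_1 \texttt{c}_1 \widehat{\gamma}_1 \;=\; \texttt{a}_2 \texttt{pk} \widehat{\gamma}_2.$$
I would therefore substitute $\texttt{c}_1 = \texttt{a}_2 \texttt{h} \gamma_2$ and $\texttt{pk} = \texttt{a}_1 \texttt{h} \gamma_1$ and use associativity to rewrite the left-hand side as $(\texttt{a}_1 \texttt{a}_2) \texttt{h} (\gamma_2 \widehat{\gamma}_1)$ and the right-hand side as $(\texttt{a}_2 \texttt{a}_1) \texttt{h} (\gamma_1 \widehat{\gamma}_2)$.

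Next, I would argue that the two bracketed factors on each side coincide. For the left factor, note that $\texttt{a}_1, \texttt{a}_2 \in \F_{q^2}^{\theta} C_n$ and, by the definition of $\theta_\sigma$ given in Lemma~\ref{homo}, the homomorphism acts trivially on every element of $C_n$; since $C_n$ is abelian, the skew multiplication restricted to $\F_{q^2}^{\theta} C_n$ coincides with the ordinary (untwisted) group-ring multiplication of a commutative group, so $\texttt{a}_1 \texttt{a}_2 = \texttt{a}_2 \texttt{a}_1$. For the right factor, the hypothesis $\theta_\sigma(x^i) = 1$ for all $i$ is precisely the assumption of Lemma~\ref{conm}, which then yields $\gamma_2 \widehat{\gamma}_1 = \gamma_1 \widehat{\gamma}_2$ for $\gamma_1, \gamma_2 \in \Gamma_\theta$. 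Combining both identities gives the desired equality, and hence $\texttt{c}_2 - \texttt{a}_1 \texttt{c}_1 \widehat{\gamma}_1 = \texttt{m}$.

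There is no real obstacle here, since all the work has already been packaged into Lemmas~\ref{conm} and~\ref{homo}; the only thing to be mildly careful about is keeping track of associativity and of the fact that $\texttt{h}$ sits in the middle of the product and need not commute with $\texttt{a}_1$ or $\gamma_1$ individually. This is handled simply by never trying to move $\texttt{h}$ past anything: the commutation is performed only between the two $C_n$-factors on the far left and between the two $\Gamma_\theta$-factors (one of them adjoined) on the far right, both of which is permitted by the two lemmas above.
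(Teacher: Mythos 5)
Your proposal is correct and follows essentially the same route as the paper: the paper's proof is the same chain of equalities $\texttt{a}_1\texttt{c}_1\widehat{\gamma}_1=\texttt{a}_1\texttt{a}_2\texttt{h}\gamma_2\widehat{\gamma}_1=\texttt{a}_2\texttt{a}_1\texttt{h}\gamma_1\widehat{\gamma}_2=\texttt{a}_2\texttt{pk}\widehat{\gamma}_2$, justified by the commutativity of $\F_{q^2}^{\theta}C_n$ and Lemma~\ref{conm}, exactly as you argue. Your write-up is merely more explicit than the paper's about where each commutation is licensed.
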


\begin{proof}Since $$(\texttt{c}_1=\texttt{a}_2 \texttt{h} \gamma_2,\texttt{c}_2=\texttt{m}+ \texttt{a}_2 \texttt{pk} \widehat{\gamma}_2) \gets \texttt{Enc}(\texttt{m}, \texttt{pk},\texttt{r}_2, \texttt{h}) $$ and $\texttt{sk}=(\texttt{a}_1,\gamma_1)$, then $$ \texttt{k}=\texttt{a}_1 \texttt{c}_1 \widehat{\gamma}_1=\texttt{a}_1 \texttt{a}_2 \texttt{h} \gamma_2 \widehat{\gamma}_1=\texttt{a}_2 \texttt{a}_1 \texttt{h} \gamma_1 \widehat{\gamma}_2 =\texttt{a}_2 \texttt{pk} \widehat{\gamma}_2, $$ and therefore $$
\texttt{c}_2 - \texttt{k}=\texttt{m}+ \texttt{a}_2 \texttt{pk} \widehat{\gamma}_2-\texttt{a}_2 \texttt{pk} \widehat{\gamma}_2=\texttt{m}$$
\end{proof}

\begin{theorem} If the DSDP assumption
holds for $\F_{q^2}^{\theta} D_{2n}$, then $\mathcal{E}$ is semantically secure.
\end{theorem}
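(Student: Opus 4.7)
The approach is to give a standard ElGamal-style reduction from IND-CPA (semantic) security of $\mathcal{E}$ to the DSDP assumption on $\F_{q^2}^{\theta} D_{2n}$, closely mirroring the proof pattern used for the analogous twisted-dihedral scheme in \cite{CV}. The key observation is that the encryption map $(\texttt{a}_2, \gamma_2) \mapsto (\texttt{a}_2 \texttt{h}\gamma_2,\; \texttt{m} + \texttt{a}_2 \texttt{pk} \widehat{\gamma}_2)$ already has exactly the shape of a DSDP instance, with the mask $\texttt{a}_2 \texttt{pk} \widehat{\gamma}_2$ playing the role of $\texttt{k}_0$, so a DSDP challenge can be embedded directly as the challenge ciphertext.

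Concretely, suppose for contradiction that a PPT adversary $\mathcal{A}$ wins the IND-CPA game for $\mathcal{E}$ with non-negligible advantage $\epsilon$. I would construct a DSDP distinguisher $\mathcal{D}$ as follows: on input a DSDP challenge $(\texttt{pk}_1, \texttt{pk}_2, \texttt{k})$, $\mathcal{D}$ hands $\texttt{pk}_1$ to $\mathcal{A}$ as the target public key (without needing the implicit secret $(\texttt{a}_1, \gamma_1)$); when $\mathcal{A}$ returns two messages $(\texttt{m}_0, \texttt{m}_1)$, $\mathcal{D}$ samples $\beta \xleftarrow[]{R} \{0,1\}$ and responds with the ciphertext $(\texttt{pk}_2,\, \texttt{m}_{\beta} + \texttt{k})$; finally $\mathcal{D}$ outputs $1$ iff $\mathcal{A}$'s guess $\widetilde{\beta}$ equals $\beta$. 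When the DSDP bit is $b=0$, so that $\texttt{k} = \texttt{a}_2 \texttt{pk}_1 \widehat{\gamma}_2$, Lemma~\ref{conm} together with Lemma~\ref{homo} lets me rewrite this as $\texttt{a}_2 \texttt{a}_1 \texttt{h} \gamma_1 \widehat{\gamma}_2$, so the simulated ciphertext has exactly the distribution of $\texttt{Enc}(\texttt{m}_{\beta}, \texttt{pk}_1, (\texttt{a}_2, \gamma_2), \texttt{h})$; hence $\Pr[\widetilde{\beta} = \beta \mid b=0] = \tfrac{1}{2} + \epsilon$.

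The remaining step, which I expect to be the main obstacle, is to show that in the case $b=1$, where $\texttt{k} = \texttt{a}_3 \texttt{h} \gamma_3$ is drawn independently of $\texttt{pk}_1$, $\texttt{pk}_2$ and of the messages, one has $\Pr[\widetilde{\beta} = \beta \mid b=1] = \tfrac{1}{2}$, so that $\textrm{DSDPadv}[\mathcal{D}, \F_{q^2}^{\theta} D_{2n}] = \epsilon$ and the DSDP assumption forces $\epsilon$ to be negligible. The delicate point is that $\texttt{k}$ is not drawn uniformly from $\F_{q^2}^{\theta} D_{2n}$ but only from the image of the map $(\texttt{a}, \gamma) \mapsto \texttt{a} \texttt{h} \gamma$, so the masking $\texttt{m}_{\beta} + \texttt{k}$ is not a literal one-time pad; I would handle this either by a smoothness/entropy argument on the image of that map (exploiting the freedom in $(\texttt{a}_3, \gamma_3)$ relative to the message space, so that the joint distribution of $(\texttt{pk}_2, \texttt{m}_\beta+\texttt{k})$ is independent of $\beta$), or by inserting an auxiliary DSDP-based hybrid that re-randomises the mask, following the template of the analogous proof in \cite{CV}. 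Once the symmetry between $\texttt{m}_0$ and $\texttt{m}_1$ is established in this way, combining the two cases yields the required contradiction and completes the proof.
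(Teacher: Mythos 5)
Your reduction is the one the paper intends: the paper's own ``proof'' is a one-line deferral to Theorem 5.2 of \cite{CV}, and that argument is exactly the ElGamal-style embedding you describe, with the DSDP triple $(\texttt{pk}_1,\texttt{pk}_2,\texttt{k}_{\texttt{b}})$ playing the roles of public key, first ciphertext component, and mask. Your $b=0$ branch is correct and matches the scheme's correctness computation via Lemmas~\ref{conm} and~\ref{homo}.

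The problem is the step you yourself flag and then leave open: you never establish $\Pr[\widetilde{\beta}=\beta\mid b=1]=\tfrac{1}{2}$, and this equality cannot be obtained by a one-time-pad argument. The mask $\texttt{k}_1=\texttt{a}_3\texttt{h}\gamma_3$ is supported on the image of $(\texttt{a},\gamma)\mapsto \texttt{a}\texttt{h}\gamma$, which is parametrized by at most $q^{2n}\cdot q^{\,n+1}$ pairs, whereas the message space $\mathcal{M}=\F_{q^2}^{\theta}D_{2n}$ has $q^{4n}$ elements; so the mask is far from uniform on $\mathcal{M}$, the supports of $\texttt{m}_0+\texttt{k}_1$ and $\texttt{m}_1+\texttt{k}_1$ can differ for adversarially chosen messages, and the second ciphertext component in the $b=1$ experiment is not information-theoretically independent of $\beta$. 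Of your two proposed repairs, the ``smoothness/entropy'' route cannot succeed as stated for the full message space, and the ``auxiliary hybrid'' is not specified (no DSDP instance takes you from a mask of the form $\texttt{a}_3\texttt{h}\gamma_3$ to a uniform element of $\mathcal{M}$). To close the argument one must either restrict $\mathcal{M}$ so that the mask distribution is translation-invariant on it, hash the mask before using it (which is effectively what the KEM of Section~\ref{VI} does), or introduce a further indistinguishability assumption. As written, the proposal identifies the right reduction but does not prove the theorem.
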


\begin{proof} The proof of
Theorem $5.2$ in~\cite{CV} can be easily adapted to this setting.
\end{proof}

\section{A Key Encapsulation Mechanism}\label{VI}

By applying a generic transformation of Hofheinz, H\"ovelmanns, and Kiltz~\cite{Kem} to $\mathcal{E}$, we introduce a CCA-secure key encapsulation mechanism. Let $\mathcal{K}=\{0,1\}^{l_1}$ be the key space and $\texttt{rep}(x)$ be a function that simply returns the binary representation of $x$. Additionally, we construct the following two functions:

\begin{itemize}
\item $\mathcal{H}_1:\{0,1\}^{*} \longrightarrow \mathcal{SK}$ is a hash function that takes in a bit-string, say $\texttt{x}$, and then uses cryptographic hash function, e.g. $\texttt{SHAKE}_{256}$, to compute a key in the keyspace from it. Following the notation
of~\cite{SHA3}, $\mathcal{H}_1(\texttt{x})=\texttt{SHAKE}_{256}(\texttt{x}, \texttt{o})$, where $\texttt{o}=\lceil \log_2(p) \rceil 2m (n+\lceil\frac{n+1}{2}\rceil)$ is the bit length of the output. From this bit-string, the corresponding pair $(\texttt{a},\gamma) \in \mathcal{SK}$ can be obtained easily.

\item $\mathcal{H}_2:\{0,1\}^{*} \longrightarrow \mathcal{K}$ is a hash function that applies a cryptographic hash function, e.g. $\texttt{SHAKE}_{256}$, to the input. Specifically $\mathcal{H}_2(\texttt{x})=\texttt{SHAKE}_{256}(\texttt{p}_1 || \texttt{x}, l_1)$, where $\texttt{p}_1$ is a prepended fixed bit-string to make it different from $\mathcal{H}_1$. 

\end{itemize}

\noindent Applying the generic transformation ${\small \textrm{U}^{\not\perp}[\textrm{T}[\mathcal{E},\mathcal{H}_2], \mathcal{H}_1]}$ from~\cite{Kem}, we get $\textrm{KEM} = (\texttt{KeyGen},\texttt{Encaps},\texttt{Decaps})$.

\begin{algorithm}[H]

\caption{ Key Generation Algorithm}
\label{keygenE}
\begin{algorithmic}[1]
\Function{\texttt{KeyGen}}{$\texttt{h}$}

\State $(\texttt{pk} ,\texttt{sk}) \gets \texttt{Gen}(\texttt{h});$
\State $\texttt{s} \xleftarrow[]{R} \mathcal{M};$

\State \Return $(\texttt{s},\texttt{sk},\texttt{pk});$
\EndFunction
\end{algorithmic}
\end{algorithm}

\begin{algorithm}[H]

\caption{ Encapsulation Algorithm}
\label{foencryption}
\begin{algorithmic}[1]
\Function{\texttt{Encaps}}{$ \texttt{pk}, \texttt{h}$}

\State $\texttt{m} \xleftarrow[]{R} \mathcal{M};$
\State $\texttt{r} \gets \mathcal{H}_1(\texttt{rep}(\texttt{m})||\texttt{rep}(\texttt{pk}));$

\State $\texttt{c} \gets \texttt{Enc}(\texttt{m}, \texttt{pk} , \texttt{r},\texttt{h} );$
\State $\texttt{K} \gets \mathcal{H}_2(\texttt{rep}(\texttt{m}) || \texttt{rep}(\texttt{c}));$
\State \Return $(\texttt{c}, \texttt{K});$
\EndFunction
\end{algorithmic}
\end{algorithm}

\begin{algorithm}[H]

\caption{ Decapsulation Algorithm}
\label{fodecryption}
\begin{algorithmic}[1]
\Function{\texttt{Decaps}}{$(\texttt{s},\texttt{pk},\texttt{sk}), \texttt{c},\texttt{h}$}
\State $\texttt{m} \gets \texttt{Dec}(\texttt{c},\texttt{sk} );$
\State $\texttt{r} \gets \mathcal{H}_1(\texttt{rep}(\texttt{m})||\texttt{rep}(\texttt{pk}));$
\If {$\texttt{c} = \texttt{Enc}(\texttt{m}, \texttt{pk} , \texttt{r},\texttt{h} )$}
\State $\texttt{K} \gets \mathcal{H}_2(\texttt{rep}(\texttt{m}) || \texttt{rep}(\texttt{c}));$
\State \Return $\texttt{K};$
\Else 
\State \Return $\mathcal{H}_2(\texttt{rep}(\texttt{s})||\texttt{rep}(\texttt{c}));$
\EndIf

\EndFunction
\end{algorithmic}
\end{algorithm}

\section{Implementation of our cryptographic constructions}\label{implementations}

We implemented our proposed public-key encryption scheme and key encapsulation mechanism as a proof-of-concept in Python. The interested reader can see it on Google Colaboratory~\cite{Impl}. 

\subsection{Dihedral Group }

To implement a dihedral group of order $2n$, we simply represent a dihedral group element $g=x^{i_1}y^{j_1}$ as the integer $k_1=j_1\cdot n+i_1$. Also, we compute a $2n\times 2n$ integer array \texttt{table} such that the row $\texttt{table}[k_1]$, $0\leq k_1<2n$, stores a $2n$ array with the integer representations of $g,gx,gx^2,\ldots gx^{n-1},gy,\ldots, gx^{n-1}y $. To compute the operation of two given group elements $g=x^{i_1}y^{j_1}$ and $h=x^{i_2}y^{j_2}$, we simply return $\texttt{table}[k_1][k_2] $, where $k_1=j_1\cdot n+i_1$ and $k_2=j_2\cdot n+i_2$. To compute the multiplicative inverse of a given group element $g=x^{i_1}y^{j_1}$, the function $\texttt{inverse}(k_1) $ returns $0$ if $k_1=0$, or $n-k_1$ if $1\leq k_1<n$, or $k_1$ if $n\leq k_1<2n$. 

\subsection{Homomorphism $\theta$ }

The homomorphism $\theta$ is implemented as described next. Given $k_1$ and $k_2$, two representations of two group elements, then the function $\texttt{homomorphism}(k_1,k_2)$ returns a pointer to the function $\sigma$ if $n \leq k_1<2n$~\textrm{and}~$n \leq k_2<2n$. Otherwise, it returns a pointer to the function identity $\texttt{I}$. We will next describe the inner working of each function.

\begin{algorithmic}[1]
\small
\Function{$\sigma$}{$a \in \F_{q^2}^{\theta} $}
\State $[b_s,b_{s-1},\ldots ,b_{0}] \gets \texttt{getBinaryReprepresation}(q);$
\State $r\gets \texttt{getOneFromQuadraticField}();$

\For{$i\gets s ~to~0$}
\State $r\gets r\cdot r$
 \If{$b_i=1$}
 \State $r\gets r\cdot a$
 \EndIf
\EndFor

\State \Return $r$ 
\EndFunction
\end{algorithmic}

\bigskip

\begin{algorithmic}[1]
\small
\Function{$\texttt{I}$}{$a \in \F_{q^2}^{\theta} $}
 \State \Return $a$
\EndFunction
\end{algorithmic}

\subsubsection{The skew dihedral group ring $\F_{q^2}^{\theta}D_{2n}$ }
An element $a=\sum_{i=0}^{n-1}a_i x^i+\sum_{i=0}^{n-1}a_{n+i} x^iy$ in the group ring $\F_{q^2}^{\theta}D_{2n}$ is represented as a array of $2n$ field elements $\texttt{a}=[\texttt{a}_0,\texttt{a}_1, \texttt{a}_2,\ldots, \texttt{a}_{2n-1} ]$, where $\texttt{a}_i$ is the representation of the field element $a_i \in \F_q^2$. Therefore the addition and product of two elements of this ring is easily implemented as shown next.

\begin{multicols}{2}
\begin{algorithmic}[1]
\small
\Function{\texttt{addition}}{$\texttt{a},\texttt{b}$}
\State $\texttt{c} \gets [\texttt{0},\cdots, \texttt{0}]$ 

 \For {$(i\gets 0;i<2n;i\gets i+1)$}
 \State $\texttt{c}[i]\gets \texttt{a}[i]+\texttt{b}[i]$;
 
 \EndFor
\State \Return \texttt{c}
\EndFunction
\end{algorithmic}

\columnbreak

\begin{algorithmic}[1]
\small
\Function{\texttt{product}}{$\texttt{a},\texttt{b}$}

\State $\texttt{c} \gets [\texttt{0},\cdots, \texttt{0}]$

 \For {$(i\gets 0;i<2n;i\gets i+1)$}
 
 \For {$(j\gets 0;j<2n;j\gets j+1)$}
 \State $k\gets\texttt{table}[i,j]$;
 \State $\texttt{fe}\gets \texttt{a}[i]\cdot(\texttt{homomorphism}(i)(b[j]))$;
 \State $\texttt{c}[k]\gets \texttt{c}[k]+\texttt{fe}$;
 \EndFor
 \EndFor
\State \Return \texttt{c}
\EndFunction
\end{algorithmic}
\end{multicols}

On the one hand, the addition function has a cost of $2n$ field additions to compute an ring element $\texttt{c}$. On the other hand, the product function has a cost of $4n^2$ field additions and $4n^2\cdot (1+f)$ field multiplications, where $f$ is the number of field multiplication to compute $\texttt{homomorphism}(i)(b[j])$. In addition to these functions, we implement the function \texttt{adjunct}, which computes the adjunct of a ring element and its cost is $2n\cdot f$ multiplications. Also, functions for computing a random element in $\Gamma_{\theta}$ ( $\F_{q^2}^{\theta}D_{2n}$, $\F_{q^2}^{\theta}C_{n}$ and $\F_{q^2}^{\theta}C_{n}y$) are described next. 

\begin{multicols}{2}
\begin{algorithmic}[1]
\small
\Function{\texttt{adjunct}}{$\texttt{a}$}
\State $\texttt{c} \gets [\texttt{0},\cdots, \texttt{0}]$

 \For {$(i\gets 0;i<2n;i\gets i+1)$}
 \State $j\gets \texttt{inverse}(i)$
 \State $\texttt{c}[j] \gets \texttt{homomorphism}(j)(a[i])$
 
 \EndFor
\State \Return \texttt{c}
\EndFunction
\end{algorithmic}

\begin{algorithmic}[1]
\small
\Function{\texttt{$\texttt{getRandomfromT}$}}{$ $}
\State $\texttt{c} \gets [\texttt{0},\cdots, \texttt{0}]$
 \State $\texttt{c}[n] \gets \texttt{getRandomFieldElement}()$
 \State $n_1 \gets n/2$
 \For {$(i\gets 1;i\leq n_1;i\gets i+1)$}
 \State $\texttt{c}[i+n]\gets \texttt{getRandomFieldElement}()$
 \State $\texttt{c}[n+(n-i)~\textrm{mod}~n]\gets \texttt{c}[i+n]$

 \EndFor
\State \Return \texttt{c}
\EndFunction
\end{algorithmic}

\columnbreak

\begin{algorithmic}[1]
\small
\Function{\texttt{$\texttt{getRandomFD2n}$}}{$ $}
\State $\texttt{c} \gets [\texttt{0},\cdots, \texttt{0}]$
 
 \For {$(i\gets 0;i< 2n;i\gets i+1)$}
 \State $\texttt{c}[i]\gets \texttt{getRandomFieldElement}()$

 \EndFor
\State \Return \texttt{c}
\EndFunction
\end{algorithmic}

\begin{algorithmic}[1]
\small
\Function{\texttt{$\texttt{getRandomFCn}$}}{$ $}
\State $\texttt{c} \gets [\texttt{0},\cdots, \texttt{0}]$
 
 \For {$(i\gets 0;i< n;i\gets i+1)$}
 \State $\texttt{c}[i]\gets \texttt{getRandomFieldElement}()$

 \EndFor
\State \Return \texttt{c}
\EndFunction
\end{algorithmic}

\begin{algorithmic}[1]
\small
\Function{\texttt{$\texttt{getRandomFCny}$}}{$ $}
\State $\texttt{c} \gets [\texttt{0},\cdots, \texttt{0}]$
 
 \For {$(i\gets n;i< 2n;i\gets i+1)$}
 \State $\texttt{c}[i]\gets \texttt{getRandomFieldElement}()$

 \EndFor
\State \Return \texttt{c}
\EndFunction
\end{algorithmic}

\end{multicols}

We also implement the following function to compute a random public element $h$.

\bigskip

\begin{algorithmic}[1]
\small
\Function{\texttt{$\texttt{getPublicElement}$}}{$ $}
\State $\texttt{sw}_1 \gets \textbf{False}$
\While{$\texttt{not}~\texttt{sw}_1$}
\State $\texttt{a}\gets \texttt{getRandomFD2n}()$
\State $i\gets 0$
\State $\texttt{sw}_2 \gets \textbf{False}$
\While{$i<n~\texttt{and}~\texttt{not}~\texttt{sw}_2$}

\If{$\texttt{a}[i]\neq\texttt{0}$} 
\State $\texttt{sw}_2 \gets \textbf{True}$
\EndIf
\State $i\gets i+1$
\EndWhile

\State $i\gets n$
\State $\texttt{sw}_3 \gets \textbf{False}$
\While{$i<2n~\texttt{and}~\texttt{not}~\texttt{sw}_3$}

\If{$\texttt{a}[i]\neq \texttt{0}$} 
\State $\texttt{sw}_3 \gets \textbf{True}$
\EndIf
\State $i\gets i+1$
\EndWhile
\State $\texttt{sw}_1 \gets \texttt{sw}_2~ \texttt{and}~\texttt{sw}_3$

\EndWhile

\State \Return \texttt{a}
\EndFunction
\end{algorithmic}

\subsubsection{Parameters choice}
For our KEM, we propose to use the parameters shown by Table~\ref{tab:sec}, which provide varying degrees of security.

\begin{table}[H]
\centering
\begin{tabular}{c|c|c|c|c}
\hline
$p$ & $m$ & $n$ & $l_1$ (bits) & level of security in bits \\ \hline
$19$ & $1$ & $19$ & $\{128, 192, 256\}$ & $124$ \\ \hline
$23$ & $1$ & $23$ & $\{128, 192, 256\}$ & $149$ \\ \hline
$31$ & $1$ & $31$ & $\{128, 192, 256\}$ & $200$ \\ \hline
$41$ & $1$ & $41$ & $\{128, 192, 256\}$ & $264$ \\ 
\end{tabular}
\caption{Proposed parameters }
\label{tab:sec}
\end{table}

Table~\ref{tab:sec} shows four sets of parameters providing various degrees of security, where $l_1 \in \{128,192,256\}$ refers to the length of the output key. The values in the level of security column were calculated as proposed in~\cite{CV}. To see the code of our implementation, please see~\cite{Impl}.

\end{document}